\def\pref{\prettyref}
\newcommand{\ie}{\textit{i.e.} }
\def\precond#1{ {}^\bullet #1 }
\def\postcond#1{ #1 {}^\bullet}
\def\DEF{\stackrel{\Delta}=}
\def\EQDEF{\stackrel{\Delta}\Leftrightarrow}
\def\f#1{\mathsf{#1}}
\def\powerset{\wp}
\def\stimes{\tilde{\times}}
\def\sprod{\widetilde{\prod}}
\def\disabling{\ominus}
\def\obj#1#2{{#1\!\rightarrow^*\!#2}}
\def\Obj{\mathbf{Obj}}
\def\Proc{\mathbf{LS}}
\def\Sol{\mathbf{Sol}}
\def\sol{\f{sol}}
\def\A{\mathcal{A}}
\def\cwA{\A_\ctx^\w}
\def\cwNodes{V_\ctx^\w}
\def\cwEdges{E_\ctx^\w}
\def\fsimplN{\zeta^N}
\def\simplN#1{\fsimplN( #1)}
\def\card#1{\##1}
\def\w{\omega}
\def\ctx{\varsigma}
\def\val{\mathbb V}
\def\Val{\mathbf{Val}}
\def\update{\f{update}}
\def\M{\mathcal M}
\def\rank{\f{rank}}
\def\childs{\f{children}}
\def\parents{\f{parents}}
\def\SetNSets#1{\powerset(\powerset^{\leq N}(#1))}
\def\Obs{\mathcal Obs}
\def\Sys{\mathcal Sys}
\def\floodname{$\cwA$-Minimal-Cut-NSets}
\def\L{\mathcal L}
\definecolor{lightgray}{rgb}{0.8,0.8,0.8}
\definecolor{lightgrey}{rgb}{0.8,0.8,0.8}
\tikzstyle{boxed ph}=[]
\tikzstyle{sort}=[fill=lightgray,rounded corners]
\tikzstyle{process}=[circle,draw,minimum size=15pt,fill=white,
\tikzstyle{black process}=[process, fill=black,text=white, font=\bfseries]
\tikzstyle{gray process}=[process, draw=black, fill=lightgray]
\tikzstyle{current process}=[process, draw=black, fill=lightgray]
\tikzstyle{process box}=[white,draw=black,rounded corners]
\tikzstyle{tick label}=[font=\footnotesize]
\tikzstyle{tick}=[black,-]
\tikzstyle{hit}=[->,>=angle 45]
\tikzstyle{selfhit}=[min distance=30pt,curve to]
\tikzstyle{bounce}=[densely dotted,>=stealth',->]
\tikzstyle{hl}=[font=\bfseries,very thick]
\tikzstyle{hl2}=[hl]
\tikzstyle{nohl}=[font=\normalfont,thin]
\newcommand{\currentScope}{}
\newcommand{\currentSort}{}
\newcommand{\currentSortLabel}{}
\newcommand{\currentAlign}{}
\newcommand{\currentSize}{}
\newcounter{la}
\newcommand{\TSetSortLabel}[2]{
  \expandafter\repcommand\expandafter{\csname TUserSort@#1\endcsname}{#2}
}
\newcommand{\TSort}[4]{
  \renewcommand{\currentScope}{#1}
  \renewcommand{\currentSort}{#2}
  \renewcommand{\currentSize}{#3}
  \renewcommand{\currentAlign}{#4}
  \ifcsname TUserSort@\currentSort\endcsname
    \renewcommand{\currentSortLabel}{\csname TUserSort@\currentSort\endcsname}
  \else
    \renewcommand{\currentSortLabel}{\currentSort}
  \fi
  \begin{scope}[shift={\currentScope}]
  \ifthenelse{\equal{\currentAlign}{l}}{
    \filldraw[process box] (-0.5,-0.5) rectangle (0.5,\currentSize-0.5);
    \node[sort] at (-0.2,\currentSize-0.4) {\currentSortLabel};
   }{\ifthenelse{\equal{\currentAlign}{r}}{
     \filldraw[process box] (-0.5,-0.5) rectangle (0.5,\currentSize-0.5);
     \node[sort] at (0.2,\currentSize-0.4) {\currentSortLabel};
   }{
    \filldraw[process box] (-0.5,-0.5) rectangle (\currentSize-0.5,0.5);
    \ifthenelse{\equal{\currentAlign}{t}}{
      \node[sort,anchor=east] at (-0.3,0.2) {\currentSortLabel};
    }{
      \node[sort] at (-0.6,-0.2) {\currentSortLabel};
    }
   }}
  \setcounter{la}{\currentSize}
  \addtocounter{la}{-1}
  \foreach \i in {0,...,\value{la}} {
    \TProc{\i}
  }
  \end{scope}
}
\newcommand{\TTickProc}[2]{ 
  \ifthenelse{\equal{\currentAlign}{l}}{
    \draw[tick] (-0.6,#1) -- (-0.4,#1);
    \node[tick label, anchor=east] at (-0.55,#1) {#2};
   }{\ifthenelse{\equal{\currentAlign}{r}}{
    \draw[tick] (0.6,#1) -- (0.4,#1);
    \node[tick label, anchor=west] at (0.55,#1) {#2};
   }{
    \ifthenelse{\equal{\currentAlign}{t}}{
      \draw[tick] (#1,0.6) -- (#1,0.4);
      \node[tick label, anchor=south] at (#1,0.55) {#2};
    }{
      \draw[tick] (#1,-0.6) -- (#1,-0.4);
      \node[tick label, anchor=north] at (#1,-0.55) {#2};
    }
   }}
}
\newcommand{\TSetTick}[3]{
  \expandafter\repcommand\expandafter{\csname TUserTick@#1_#2\endcsname}{#3}
}
\newcommand{\myProc}[3]{
  \ifcsname TUserTick@\currentSort_#1\endcsname
    \TTickProc{#1}{\csname TUserTick@\currentSort_#1\endcsname}
  \else
    \TTickProc{#1}{#1}
  \fi
  \ifthenelse{\equal{\currentAlign}{l}\or\equal{\currentAlign}{r}}{
    \node[#2] (\currentSort_#1) at (0,#1) {#3};
  }{
    \node[#2] (\currentSort_#1) at (#1,0) {#3};
  }
}
\newcommand{\TSetProcStyle}[2]{
  \expandafter\repcommand\expandafter{\csname TUserProcStyle@#1\endcsname}{#2}
}
\newcommand{\TProc}[1]{
  \ifcsname TUserProcStyle@\currentSort_#1\endcsname
    \myProc{#1}{\csname TUserProcStyle@\currentSort_#1\endcsname}{}
  \else
    \myProc{#1}{process}{}
  \fi
}
\newcommand{\repcommand}[2]{
  \providecommand{#1}{#2}
  \renewcommand{#1}{#2}
}
\newcommand{\THit}[5]{
  \path[hit] (#1) edge[#2] (#3#4);
  \expandafter\repcommand\expandafter{\csname TBounce@#3@#5\endcsname}{#4}
}
\newcommand{\TBounce}[4]{
  (#1\csname TBounce@#1@#3\endcsname) edge[#2] (#3#4)
}
\tikzstyle{plot}=[every path/.style={-}]
\tikzstyle{axe}=[gray,->,>=stealth']
\tikzstyle{ticks}=[-,font=\scriptsize,every node/.style={gray}]
\tikzstyle{mean}=[thick,-]
\tikzstyle{interval}=[line width=5pt,red,draw opacity=0.7,-]
\tikzstyle{bounce}=[->,densely dotted,>=stealth']
\tikzstyle{selfhit}=[min distance=5mm,curve to]
\tikzstyle{hitless graph}=[every edge/.style={draw,-}]
\tikzstyle{cross}=[preaction={draw=white,-,line width=6pt,shorten >=15pt}]
\tikzstyle{cross2}=[preaction={draw=white,-,line width=3pt,shorten >=10pt}]
\tikzstyle{grn}=[every node/.style={circle,draw,outer sep=2pt,minimum
\tikzstyle{elabel}=[draw=none,sloped,above=-5pt,outer sep=0,font=\scriptsize]
\tikzstyle{inh}=[->,>=|]
\tikzstyle{act}=[->,>=latex]
\tikzstyle{piproc}=[draw,circle,minimum size=22pt]
\tikzstyle{aS}=[every edge/.style={draw,->}]
\tikzstyle{Asol}=[draw,circle,minimum size=5pt,inner sep=0]
\tikzstyle{Aproc}=[draw,minimum width=16pt,minimum height=16pt,inner sep=1pt]
\tikzstyle{Aobj}=[]
\tikzstyle{Ainh}=[-|,shorten >= 1mm]
\title{Under-approximating Cut Sets for Reachability in Large Scale Automata Networks}
\author{Lo\"ic Paulev\'e\inst{1}, Geoffroy Andrieux\inst{2}, Heinz Koeppl\inst{1}}
\institute{
ETH Z\"urich, Switzerland.
\and
IRISA Rennes, France.
}
\begin{document}

\maketitle

\begin{abstract}
In the scope of discrete finite-state models of interacting components, 
we present a novel algorithm for identifying sets of local states of components whose activity is
necessary for the reachability of a given local state.
If all the local states from such a set are disabled in the model, the concerned reachability is impossible.

Those sets are referred to as cut sets and are computed from a particular abstract causality structure, 
so-called Graph of Local Causality, inspired from previous work and generalised here to finite
automata networks.
The extracted sets of local states form an under-approximation of the complete minimal
cut sets of the dynamics: there may exist smaller or additional cut sets for the given
reachability.

Applied to qualitative models of biological systems, such cut sets provide potential therapeutic targets that
are proven to prevent molecules of interest to become active, up to the correctness of the model.
Our new method makes tractable the formal analysis of very large scale networks, as illustrated by
the computation of cut sets within a Boolean model of biological pathways interactions
gathering more than 9000 components.
\end{abstract}

\section{Introduction}
\label{sec:intro}

With the aim of understanding and, ultimately, controlling physical systems, one generally constructs
dynamical models of the known interactions between the components of the system.
Because parts of those physical processes are ignored or still unknown, dynamics of such models have
to be interpreted as an over-approximation of the real system dynamics:
any (observed) behaviour of the real system has to have a matching behaviour in the abstract model, the
converse being potentially false.
In such a setting, a valuable contribution of formal methods on abstract models of physical systems
resides in the ability to prove the impossibility of particular behaviours.

Given a discrete finite-state model of interacting components, such as an automata network, 
we address here the computation of sets of local states of components that are necessary for 
reaching a local state of interest from a partially determined initial global state.
Those sets are referred to as \emph{cut sets}.
Informally, each trace leading to the reachability of interest has to involve, at one point, at
least one local state of a cut set.
Hence, disabling in the model all the local states referenced in one cut set should prevent the
occurrence of the concerned reachability from delimited initial states.

Applied to a model of a biological system where the reachability of interest is known to occur,
such cut sets provide potential coupled therapeutic targets to control the activity of a particular
molecule (for instance using gene knock-in/out).
The contrary implies that the abstract model is not an over-approximation of the concrete system.

\paragraph{Contribution.}
In this paper, we present a new algorithm to extract sets of local states that are necessary to
achieve the concerned reachability within a finite automata network.
Those sets are referred to as cut sets, and we limit ourselves to $N$-sets, \ie having a maximum
cardinality of $N$.

The finite automata networks we are considering are closely related to $1$-safe Petri nets
\cite{BC92} having mutually exclusive places.
They subsume Boolean and discrete networks \cite{Kauffman69,Thomas73,Richard10-AAM,BernotSemBRN},
synchronous or asynchronous, that are widely used for the qualitative modelling of biological
interaction networks.

A naive, but complete, algorithm could enumerate all potential candidate $N$-sets, disable each of
them in the model, and then perform model-checking to verify if the targeted reachability is still
verified.
If not, the candidate $N$-set is a cut set.
This would roughly leads to $m^N$ tests, where $m$ is the total number of local states in the
automata network. 
Considering that the model-checking within automata networks is PSPACE-complete \cite{Harel02}, this
makes such an approach intractable on large networks.

The proposed algorithm aims at being tractable on systems composed of a very large number of
interacting components, but each of them having a small number of local states.
Our method principally overcomes two challenges:
prevent a complete enumeration of candidate $N$-sets;
and prevent the use of model-checking to verify if disabling a set of local states break the
concerned reachability.
It inherently handles partially-determined initial states: the resulting cut $N$-set of local states are proven
to be necessary for the reachability of the local state of interest from \emph{any} of the supplied
global initial states.

The computation of the cut $N$-sets takes advantage of an abstraction of the formal model which
highlights some steps that are necessary to occur prior to the verification of a given
reachability properties.
This results in a causality structure called a \emph{Graph of Local Causality} (GLC), which is
inspired by \cite{PMR12-MSCS}, and that we generalise here to automata networks.
Such a GLC has a size polynomial with the total number of local states in the automata network, and
exponential with the number of local states within one automata.
Given a GLC, our algorithm propagates and combines the cut $N$-sets of the local states referenced
in this graph by computing unions or products, depending on the disjunctive or
conjunctive relations between the necessary conditions for their reachability.
The algorithm is proven to converge in the presence of dependence cycles.

In order to demonstrate the scalability of our approach, we perform the search for cut $N$-sets 
of processes within a very large Boolean model of biological processes relating more than 9000
components.
Despite the highly combinatorial dynamics, a prototype implementation manages to compute up to the
cut $5$-sets within a few minutes.
To our knowledge, this is the first time such a formal dynamical analysis has been performed on such
a large dynamical model of biological system.

\paragraph{Related work and limitations.}
Cut sets are commonly defined upon graphs as set of edges or vertices which, if removed,
disconnect a given pair of nodes \cite{SW86}.
For our purpose, this approach could be directly applied to the global transition graph (union of
all traces) to identify local states or transitions for which the remove would disconnect initial
states from the targeted states.
However, the combinatorial explosion of the state space would make it intractable for large
interacting systems.

The aim of the presented method is somehow similar to the generation of minimal cut sets in fault
trees \cite{FaultTreeReview,Tang04} used for reliability analysis, as the structure representing
reachability causality contains both \emph{and} and \emph{or} connectors.
However, the major difference is that we are here dealing with cyclic directed graphs which prevents the
above mentioned methods to be straightforwardly applied.

Klamt \textit{et al.} have developed a complete method for identifying minimal cut sets (also called intervention
sets) dedicated to biochemical reactions networks, hence involving cycling dependencies
\cite{Klamt04-Bioinformatics}.
This method has been later generalised to Boolean models of signalling networks
\cite{Samaga10-JCB}.
Those algorithms are mainly based on the enumeration of possible candidates, with techniques to
reduce the search space, for instance by exploiting symmetry of dynamics.
Whereas intervention sets of \cite{Klamt04-Bioinformatics,Samaga10-JCB} can contain either local
states or reactions, our cut sets are only composed of local states.

Our method follows a different approach than \cite{Klamt04-Bioinformatics,Samaga10-JCB} 
by not relying on candidate enumeration but computing the cut sets directly on an
abstract structure derived statically from the model, which should make tractable the analysis of
very large networks.
The comparison with \cite{Samaga10-JCB} is detailed in \pref{ssec:is}.

In addition, our method is generic to any automata network, but relies on an abstract interpretation
of dynamics which leads to under-approximating the cut sets for reachability:
by ignoring certain dynamical constraints, the analysis can miss several cut sets and output cut
sets that are not minimal for the concrete model.
Finally, although we focus on finding the cut sets for the reachability of only \emph{one} local
state, our algorithm computes the cut sets for the (independent) reachability of all local states
referenced in the GLC.

\paragraph{Outline.}
\pref{sec:structure} introduces a generic characterisation of the \emph{Graph of Local Causality} with
respect to automata networks;
\pref{sec:algorithm} states and sketches the proof of the algorithm for extracting a subset of
$N$-sets of local states necessary for the reachability of a given local state.
\pref{sec:application} discusses the application to systems biology by comparing with the
related work and applying our new method to a very large scale model of biological interactions.
Finally, \pref{sec:discuss} discusses the results presented and some of their possible extensions.

\paragraph{Notations.}
$\wedge$ and $\vee$ are the usual logical \textit{and} and \textit{or} connectors.
$[1;n] = \{1, \cdots, n\}$.
Given a finite set $A$, $\card A$ is the cardinality of $A$;
$\powerset(A)$ is the power set of $A$;
$\powerset^{\leq N}(A)$ is the set of all subsets of $A$ with cardinality at most $N$.
Given sets $A^1,\dots,A^n$,
$\bigcup_{i\in[1;n]} A^i$ is the union of those sets,
with the empty union $\bigcup_{\emptyset}\DEF\emptyset$;
and $A^1 \times \cdots \times A^n$ is the usual Cartesian product.
Given sets of sets $B^1,\dots,B^n \in \powerset(\powerset(A))$,
$\sprod_{i\in[1;n]} B^i \DEF B^1 \stimes \cdots \stimes B^n \in\powerset(\powerset(A))$ is the \emph{sets of sets product}
where
$\{ e_1, \dots, e_n\}\stimes \{ e'_1, \dots, e'_m \}
	\DEF \{ e_i \cup e'_j \mid i\in [1;n] \wedge j\in [1;m]\}$.
In particular $\forall (i,j)\in [1;n]\times[1;m]$, 
$B^i \stimes B^j = B^j \stimes B^i$ and $\emptyset \stimes B^i = \emptyset$.
The empty sets of sets product $\sprod_{\emptyset}\DEF\{ \emptyset \}$.
If $M:A \mapsto B$ is a mapping from elements in $A$ to elements in $B$,
$M(a)$ is the value in $B$ mapped to $a\in A$;
$M\{a \mapsto b\}$ is the mapping $M$ where $a\in A$ now maps to $b\in B$.

\section{Graph of Local Causality}
\label{sec:structure}

We first give basic definitions of automata networks, local state disabling, context and local state
reachability;
then we define the local causality of an objective (local reachability),
and the \emph{Graph of Local Causality}.
A simple example is given at the end of the section.

\subsection{Finite Automata Networks}

We consider a network of automata $(\Sigma,S,\L,T)$ which relates a finite number of interacting
finite state automata $\Sigma$ (\pref{def:cfsm}).
The global state of the system is the gathering of the local state of composing automata.
A transition can occur if and only if all the local states sharing a common transition label
$\ell\in L$ are present in the global state $s\in S$ of the system.
Such networks characterize a class of $1$-safe Petri Nets \cite{BC92} having groups of mutually exclusive places,
acting as the automata.
They allow the modelling of Boolean networks and their discrete generalisation, having either
synchronous or asynchronous transitions.

\begin{definition}[Automata Network $(\Sigma,S,\L,T)$]
\label{def:cfsm}
An automata network is defined by a tuple $(\Sigma,S,\L,T)$ where
\begin{itemize}
\item $\Sigma = \{ a, b, \dots, z\}$ is the finite set of automata identifiers;
\item For any $a\in\Sigma$, $S(a) = [1;k_a]$ is the finite set of local states of automaton $a$;
$S = \prod_{a\in\Sigma} [1;k_a]$ is the finite set of global states.
\item $\L=\{\ell_1, \dots, \ell_m\}$ is the finite set of transition labels;
\item $T = \{ a \mapsto T_a \mid a\in \Sigma \}$, where $\forall a\in\Sigma,
	T_a \subset [1;k_a] \times \L \times [1;k_a]$,
	is the mapping from automata to their finite set of local transitions.
\end{itemize}

We note $i\xrightarrow \ell j\in T(a) \EQDEF (i,\ell,j)\in T_a$
and
$a_i\xrightarrow \ell a_j\in T \EQDEF i\xrightarrow\ell j\in T(a)$.

$\forall\ell\in\L$, we note $\precond{\ell}\DEF\{a_i\mid a_i\xrightarrow\ell a_j\in T(a)\}$
and $\postcond{\ell}\DEF\{a_j\mid a_i\xrightarrow\ell a_j\in T(a)\}$.

The set of local states is defined as
$\Proc \DEF \{ a_i \mid a\in \Sigma \wedge i \in[1;k_a] \}$.

The global transition relation $\rightarrow \subset S\times S$ is defined as:
\begin{align*}
s\rightarrow s' \EQDEF 
\exists \ell\in\L: &
	\forall a_i\in\precond{\ell}, s(a) = a_i
\wedge \forall a_j\in\postcond{\ell}, s'(a) = a_j \\
\wedge & \forall b\in\Sigma, S(b)\cap\precond{\ell}=\emptyset
		\Rightarrow	s(b)=s'(b).
\end{align*}
\end{definition}

Given an automata network $\Sys=(\Sigma,S,\L,T)$ and a subset of its local states $ls\subseteq \Proc$,
$\Sys \disabling ls$ refers to the system where all the local states $ls$ have been disabled, i.e. they
can not be involved in any transition (\pref{def:disabling}).

\begin{definition}[Process Disabling]
\label{def:disabling}
Given $\Sys=(\Sigma,S,\L,T)$ and $ls\in\powerset(\Proc)$, 
$\Sys\disabling ls\DEF (\Sigma,S,\L',T')$ where
	$\L' = \{ \ell\in\L\mid ls\cap \precond\ell = \emptyset \}$
	and
	$T' = \{ a_i\xrightarrow{\ell} a_j\in T \mid \ell\in \L'\}$.
\end{definition}

From a set of acceptable initial states delimited by a \emph{context} $\ctx$ (\pref{def:ctx}),
we say a given local state $a_j\in\Proc$ is reachable if and only if
there exists a finite number of transitions in $\Sys$ leading to a global state where $a_j$ is
present (\pref{def:reachability}).

\begin{definition}[Context $\ctx$]
\label{def:ctx}
Given a network $(\Sigma,S,\L,T)$,
a context $\ctx$ is a mapping from each automaton $a\in\Sigma$ to a non-empty subset of its local
states:
$\forall a\in\Sigma, \ctx(a) \in \powerset(S(a)) \wedge \ctx(a)\neq\emptyset$.
\end{definition}

\begin{definition}[Process reachability]
\label{def:reachability}
Given a network $(\Sigma,S,\L,T)$ and a context $\ctx$,
the local state $a_j\in \Proc$ is \emph{reachable} from $\ctx$
if and only if 
$\exists s_0, \dots, s_m\in S$ such that
$\forall a\in \Sigma, s_0(a) \in \ctx(a)$,
and
$s_0 \rightarrow \cdots \rightarrow s_m$,
and
$s_m(a) = j$.
\end{definition}

\subsection{Local Causality}

Locally reasoning within one automaton $a$, the global reachability of $a_j$ from $\ctx$
can be expressed as the reachability of $a_j$ from a local state $a_i\in \ctx(a)$.
This local reachability specification is referred to as an \emph{objective}
noted $\obj{a_i}{a_j}$
(\pref{def:objective})

\begin{definition}[Objective]
\label{def:objective}
Given a network $(\Sigma,S,\L,T$),
the reachability of local state $a_j$ from $a_i$ is called an \emph{objective} and is denoted
$\obj{a_i}{a_j}$.
The set of all objectives is referred to as 
$\Obj\DEF\{ \obj{a_i}{a_j} \mid (a_i,a_j)\in\Proc \times \Proc \}$.
\end{definition}

Given an objective $P=\obj{a_i}{a_j}\in\Obj$, we define $\sol(P)$ the \emph{local causality} of $P$
(\pref{def:sol}):
each element $ls\in \sol(P)$ is a subset of local states, referred to as a (local) solution for $P$,
which are all involved at some times prior to the reachability of $a_j$.
$\sol(P)$ is sound as soon as the disabling of at least one local state in \emph{each} solution
makes the reachability of $a_j$ impossible from any global state containing $a_i$
(\pref{pro:sol}).
Note that if $\sol(P) = \{ \{a_i\} \cup ls^1, \dots, ls^m \}$ is sound,
$\sol'(P) = \{ ls^1, \dots, ls^m \}$ is also sound.
$\sol(\obj{a_i}{a_j})=\emptyset$ implies that $a_j$ can never be reached from $a_i$,
and $\forall a_i \in \Proc, \sol(\obj{a_i}{a_i}) \DEF \{ \emptyset \}$.

\begin{definition}
\label{def:sol}
$\sol : \Obj \mapsto \powerset(\powerset(\Proc))$
is a mapping from objectives to sets of sets of local states such that
$\forall P\in \Obj, \forall ls \in \sol(P), \nexists ls' \in \sol(P), ls\neq ls'$ such that
$ls' \subset ls$.
The set of these mappings is noted
$\Sol \DEF \{ \langle P,ls\rangle \mid ls \in \sol(P)\}$.
\end{definition}

\begin{property}[$\sol$ soundness]
\label{pro:sol}
$\sol(\obj{a_i}{a_j}) = \{ ls^1, \dots, ls^n \}$
is a sound set of solutions for the network $\Sys=(\Sigma,S,\L,T)$ if and only if
$\forall kls \in \sprod_{i\in[1;n]} ls^i$,
$a_j$ is not reachable in $\Sys\disabling kls$ from any state $s\in S$ such that $s(a)=i$.
\end{property}

In the rest of this paper we assume that \pref{pro:sol} is verified, and consider $\sol$ computation out of the scope of this paper.

Nevertheless, we briefly describe a construction of a sound $\sol(\obj{a_i}{a_j})$ for an automata
network $(\Sigma,S,\L,T)$; an example is given at the end of this section.
This construction generalises the computation of GLC from the Process Hitting framework, a
restriction of network of automata depicted in \cite{PMR12-MSCS}.
For each acyclic sequence $a_i \xrightarrow {\ell_1} \dots \xrightarrow {\ell_m} a_j$ of local
transitions in $T(a)$, 
and by defining 
$\f{ext}_a(\ell) \DEF \{ b_j \in\Proc\mid b_j\xrightarrow \ell b_k\in T, b\neq a\}$,
we set
$ls \in \sprod_{\ell\in \{ \ell_1, \dots, \ell_m \mid \f{ext}_a(\ell)\neq\emptyset \}} \f{ext}_a(\ell)
\Rightarrow ls\in\sol(\obj{a_i}{a_j})$, up to sursets removing.
One can easily show that \pref{pro:sol} is verified with such a construction.
The complexity of this construction is exponential in the number of local states of automata and
polynomial in the number of automata.
Alternative constructions may also provide sound (and not necessarily equal) $\sol$.

\subsection{Graph of Local Causality}

Given a local state $a_j\in \Proc$ and an initial context $\ctx$, the reachability of $a_i$ is
equivalent to the realization of any objective $\obj{a_i}{a_j}$, with $a_i\in\ctx(a)$.
By definition, if $a_j$ is reachable from $\ctx$, there exists $ls \in \sol(\obj{a_i}{a_j})$
such that, $\forall b_k\in ls$, $b_k$ is reachable from $\ctx$.

The (directed) \emph{Graph of Local Causality} (GLC, \pref{def:cwA}) relates this recursive reasoning
from a given set of local states $\w\subseteq \Proc$ by linking 
every local state $a_j$ to all objectives $\obj{a_i}{a_j}, a_i\in\ctx(a)$;
every objective $P$ to its solutions $\langle P, ls\rangle\in\Sol$;
every solution $\langle P, ls\rangle$ to its local states $b_k\in ls$.
A GLC is said to be valid if $\sol$ is sound for all referenced objectives (\pref{pro:cwA}).

\begin{definition}[Graph of Local Causality]
\label{def:cwA}
Given a context $\ctx$ and a set of local states $\w\subseteq \Proc$,
the \emph{Graph of Local Causality} (GLC)
$\cwA \DEF (\cwNodes, \cwEdges)$,
with
$\cwNodes \subseteq \Proc \cup \Obj \cup \Sol$
and
$\cwEdges \subseteq \cwNodes \times \cwNodes$,
is the smallest structure satisfying:
\begin{align*}
\w & \subseteq \cwNodes
\\
a_i\in\cwNodes\cap\Proc &\Leftrightarrow \{ (a_i,\obj{a_j}{a_i}) \mid a_j\in\ctx  \}\subseteq\cwEdges
\\
\obj{a_i}{a_j}\in\cwNodes\cap\Obj &\Leftrightarrow 
	\{ (\obj{a_i}{a_j},\langle\obj{a_i}{a_j},ls\rangle) \mid \langle\obj{a_i}{a_j},ls\rangle\in\Sol\}\subseteq\cwEdges
\\
\langle P, ls \rangle\in\cwNodes\cap\Sol &\Leftrightarrow
	\{ (\langle P,ls\rangle, a_i) \mid a_i\in ls \}\subseteq\cwEdges
	\enspace.
\end{align*}
\end{definition}

\begin{property}[Valid Graph of Local Causality]
A GLC $\cwA$ is \emph{valid} if, $\forall P\in \cwNodes\cap\Obj$, $\sol(P)$ is sound.
\label{pro:cwA}
\end{property}

This structure can be constructed starting from local states in $\w$ and by iteratively adding the imposed
children.
It is worth noticing that this graph can contain cycles.
In the worst case, $\card{\cwNodes} = \card{\Proc} + \card{\Obj} + \card{\Sol}$ and
$\card{\cwEdges} = \card{\Obj} + \card{\Sol} + \sum_{\langle P,ls\rangle\in \Sol} \card{ls}$.

\bigskip

\begin{example}
\label{ex:CFSM}
\pref{fig:graph-example} shows an example of GLC.
Local states are represented by boxed nodes and elements of $\Sol$ by small circles.

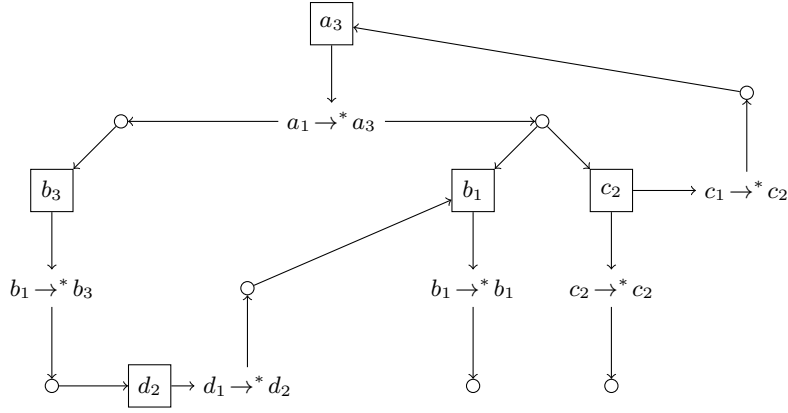
\begin{figure}[t]
\centering
\begin{tikzpicture}[aS,node distance=1.3cm,on grid=true]

\node[Aproc] (a2) {$a_3$};
\node[Aobj, below of=a2] (a02) {$\obj{a_1}{a_3}$};
\node[Asol, right of=a02, xshift=1.5cm] (sa02) {};
\node[Aproc, below right of=sa02] (c1) {$c_2$};
\node[Aproc, below left of=sa02] (b0) {$b_1$};

\node[Aobj, below of=c1] (c11) {$\obj{c_2}{c_2}$};
\node[Asol, below of=c11] (sc11) {};
\node[Aobj, right of=c1, xshift=0.5cm] (c01) {$\obj{c_1}{c_2}$};
\node[Asol, above of=c01] (sc01) {};

\node[Aobj, below of=b0] (b00) {$\obj{b_1}{b_1}$};
\node[Asol, below of=b00] (sb00) {};

\node[Asol, left of=a02, xshift=-1.5cm] (sa02b) {};
\node[Aproc, below left of=sa02b] (b2) {$b_3$};
\node[Aobj, below of=b2] (b02) {$\obj{b_1}{b_3}$};
\node[Asol, below of=b02] (sb02) {};
\node[Aproc, right of=sb02] (d1) {$d_2$};
\node[Aobj, right of=d1] (d01) {$\obj{d_1}{d_2}$};
\node[Asol, above of=d01] (sd01) {};

\path
	(a2) edge (a02)
	(a02) edge (sa02) edge (sa02b)
	(sa02) edge (b0) edge (c1)
	(b0) edge (b00) (b00) edge (sb00)
	(c1) edge (c11) edge (c01)
	(c11) edge (sc11)
	(c01) edge (sc01)
	(sc01) edge (a2)
	(sa02b) edge (b2)
	(b2) edge (b02) (b02) edge (sb02)
	(sb02) edge (d1)
	(d1) edge (d01)
	(d01) edge (sd01)
	(sd01) edge (b0)
;

\end{tikzpicture}

\caption{Example of Graph of Local Causality that is valid for the automata network defined in \pref{ex:CFSM}}
\label{fig:graph-example}
\end{figure}
For instance, such a GLC is valid for the following automata network $(\Sigma,S,\L,T)$,
with initial context $\ctx = \{ a\mapsto \{1\}; b\mapsto\{1\}; c\mapsto\{1,2\}; d\mapsto\{2\}\}$:
\begin{align*}
\Sigma & = \{ a,b,c,d \} & \L & = \{ \ell_1, \ell_2, \ell_3, \ell_4, \ell_5, \ell_6 \}\\
S(a) &= [1;3] & T(a)&=\{ 1 \xrightarrow{\ell_2} 2; 2 \xrightarrow{\ell_3} 3; 1 \xrightarrow{\ell_1} 3; 3 \xrightarrow{\ell_4} 2 \} 
\\
S(b) &= [1;3] & T(b)&=\{ 1 \xrightarrow{\ell_2} 2; 1 \xrightarrow{\ell_5} 3; 1 \xrightarrow{\ell_6} 1; 3 \xrightarrow{\ell_1} 2 \}
\\
S(c) &= [1;2] & T(c)&=\{ 1 \xrightarrow{\ell_4} 2; 2 \xrightarrow{\ell_3} 1 \}
\\
S(d) &= [1;2] & T(d)&=\{ 1\xrightarrow{\ell_6}2; 2\xrightarrow{\ell_5}1\}
\end{align*}

For example, within automata $a$, there are two acyclic sequences from $1$ to $3$: 
	$1 \xrightarrow{\ell_2} 2 \xrightarrow{\ell_3} 3$ and $1 \xrightarrow{\ell_1} 3$.
Hence, if $a_3$ is reached from $a_1$, then necessarily, one of these two sequences has to be used
(but not necessarily consecutively).
For each of these transitions, the transition label is shared by exactly one local state in another
automaton: $b_1$, $c_2$, $b_3$ for $\ell_2$, $\ell_3$, $\ell_1$, respectively.
Therefore, if $a_3$ is reached from $a_1$, then necessarily either both $b_1$ and $c_2$, or $b_3$ have been
reached before.
Hence $\sol(\obj{a_1}{a_3}) = \{ \{ b_1, c_2 \}, \{ b_3\}\}$ is sound, as disabling
either $b_1$ and $b_3$, or $c_2$ and $b_3$, would remove any possibility to reach $a_3$ from $a_1$.

\end{example}

\section{Necessary Processes for Reachability}
\label{sec:algorithm}

We assume a global sound GLC $\cwA=(\cwNodes,\cwEdges)$, with the usual accessors
for the direct relations of nodes:
\begin{align*}
\childs&: \cwNodes \mapsto \powerset(\cwNodes)
&
\parents&: \cwNodes \mapsto \powerset(\cwNodes)
\\
\childs(n) &\DEF \{ m\in\cwNodes\mid (n,m)\in\cwEdges\} 
&
\parents(n) &\DEF \{ m\in\cwNodes\mid (m,n)\in\cwEdges\}
\end{align*}

Given a set of local states $\Obs\subseteq\Proc$, this section introduces an algorithm computing upon
$\cwA$ the set $\val(a_i)$ of minimal cut $N$-sets of local states in $\Obs$ that are necessary for
the independent reachability of each local state $a_i\in\Proc\cap\cwNodes$.
The minimality criterion actually states that $\forall ls\in\val(a_i)$, there is no different
$ls'\in\val(a_i)$ such that $ls'\subset ls$.

Assuming a first valuation $\val$ (\pref{def:val}) associating to each node a set of (minimal) cut
$N$-sets, the set of cut $N$-sets for the node $n$ can be refined following $\update(\val,n)$
(\pref{def:update}):
\begin{itemize}
\item 
if $n$ is a solution $\langle P, ls\rangle\in\Sol$,
it is sufficient to prevent the reachability of \emph{any} local state in $ls$ to cut $n$;
therefore, the cut $N$-sets results from the union of the cut $N$-sets of $n$ children (all local
states).
\item 
If $n$ is an objective $P\in\Obj$, all its solutions (in $\sol(P)$) have to be cut in order to
ensure that $P$ is not realizable:
hence, the cut $N$-sets result from the product of children cut $N$-sets (all solutions).
\item 
If $n$ is a local state $a_i$, it is sufficient to cut all its children (all objectives) to prevent the reachability of $a_i$
from any state in the context $\ctx$.
In addition, if $a_i\in\Obs$, $\{a_i\}$ is added to the set of its cut $N$-sets.
\end{itemize}

\begin{definition}[Valuation $\val$]
\label{def:val}
A \emph{valuation} $\val: \cwNodes \mapsto \SetNSets{\Obs}$
is a mapping from each node of $\cwA$ to a set of $N$-sets of local states.
$\Val$ is the set of all valuations.
$\val_0\in \Val$ refers to the valuation such that $\forall n\in\cwNodes, \val_0(n) = \emptyset$.
\end{definition}

\begin{definition}[$\update : \Val \times \cwNodes \mapsto \Val$]
\label{def:update}
\begin{align*}
\update(\val, n) & \DEF
\begin{cases}
\val\{n\mapsto\simplN{\bigcup_{m\in\childs(n)} \val(m)} \} & \text{if }n\in \Sol \\
\val\{n\mapsto\simplN{\sprod_{m\in\childs(n)} \val(m)} \} & \text{if }n\in \Obj \\
\val\{n\mapsto\simplN{\sprod_{m\in\childs(n)} \val(m)} \}
		& \text{if }n\in \Proc \setminus \Obs\\
\val\{n\mapsto\simplN{\{\{a_i\}\} \cup \sprod_{m\in\childs(n)} \val(m)} \}
		& \text{if }n\in \Proc \cap \Obs
\end{cases}
\end{align*}
where 
$\simplN{\{e_1, \dots, e_n\}} \DEF \{ e_i \mid i\in[1;n] \wedge \card{e_i}\leq N 
									\wedge \nexists j\in[1;n], j\neq i, e_j\subset e_i \}$,
$e_i$ being sets, $\forall i\in[1;n]$.
\end{definition}

Starting with $\val_0$, one can repeatedly apply $\update$ on each node of $\cwA$ to refine its
valuation.
Only nodes where one of their children value has been modified should be considered for updating.

Hence, the order of nodes updates should follow the topological order of the GLC, where children
have a lower rank than their parents (i.e., children are treated before their parents).
If the graph is actually acyclic, then it is sufficient to update the value of each node only once.
In the general case, \ie in the presence of Strongly Connected Components (SCCs) --- nodes belonging
to the same SCC have the same rank ---, the nodes within a SCC have to be iteratively updated until
the convergence of their valuation.

\pref{alg:flood} formalizes this procedure where
$\rank(n)$ refers to the topological rank of $n$, as it can be derived from 
Tarjan's strongly connected components algorithm \cite{Tarjan72}, for example.
The node $n\in\cwNodes$ to be updated is selected as being the one having the least rank amongst the
nodes to update (delimited by $\M$).
In the case where several nodes with the same lowest rank are in $\M$, they can be either
arbitrarily or randomly picked.
Once picked, the value of $n$ is updated.
If the new valuation of $n$ is different from the previous, the parents of $n$ are added to the
list of nodes to update (lines 6-8 in \pref{alg:flood}).

\begin{algorithm}[t]
\algrenewcommand{\algorithmiccomment}[1]{\hskip1em /\textit{#1}/}
\begin{algorithmic}[1]
\State $\M \gets \cwNodes$
\State $\val \gets \val_0$
\While{$\M \neq \emptyset$}
\State $n \gets \arg\min_{m\in\M} \{ \rank(m) \}$
\State $\M \gets \M \setminus \{ n \}$
\State $\val' \gets \update(\val, n)$
\If{$\val'(n) \neq \val(n)$}
\State $\M \gets \M \cup \parents(n)$
\EndIf
\State $\val \gets \val'$
\EndWhile
\State \Return $\val$
\end{algorithmic}
\caption{\textproc{\floodname}}
\label{alg:flood}
\end{algorithm}

\pref{lem:halt} states the convergence of \pref{alg:flood} and \pref{thm:correctness} its
correctness:
for each local state $a_i\in \cwNodes\cap\Proc$, 
each set of local states $kls\in\val(a_i)$ (except $\{a_i\}$ singleton) references the local states that are
all necessary to reach prior to the reachability of $a_i$ from any state in $\ctx$.
Hence, if all the local states in $kls$ are disabled in $\Sys$, $a_i$ is not reachable from any state in $\ctx$.

\begin{lemma}
\label{lem:halt}
\textproc{\floodname}
always terminates.
\end{lemma}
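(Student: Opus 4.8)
The plan is to exhibit a finite partially ordered domain in which the computed valuation can only grow, to prove that a single application of $\update$ never decreases a value, and then to bound the number of loop iterations by the number of genuine value changes. The domain is fixed first: since $\Obs$ is finite, $\powerset^{\leq N}(\Obs)$ is finite and hence so is $\SetNSets{\Obs}$; moreover every value ever assigned is an image of $\fsimplN$, i.e. an antichain of subsets of $\Obs$ of cardinality at most $N$, so the set of valuations reachable by the algorithm is finite. On such simplified sets I introduce the subsumption order $X\sqsubseteq Y \EQDEF \forall x\in X,\exists y\in Y, y\subseteq x$, extended pointwise to valuations. A short argument shows that on antichains $\sqsubseteq$ is antisymmetric, hence a finite partial order, with least element $\emptyset$ and greatest element $\{\emptyset\}$.

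The core step is to prove that, for every node $n$, the map $\update(\cdot,n)(n)$ is monotone for $\sqsubseteq$. As $\update$ is assembled from set-of-sets union, the product $\stimes$, and the simplification $\fsimplN$, it suffices to check that $X\mapsto\simplN{X\cup Z}$ and $(X,Y)\mapsto\simplN{X\stimes Y}$ are monotone. The product is the delicate case: given $w\in\simplN{X\stimes Y}$ one writes $w=x\cup y$ with $x\in X$, $y\in Y$, selects $x'\subseteq x$ in $X'$ and $y'\subseteq y$ in $Y'$, and notes that $x'\cup y'\subseteq w$ still satisfies $\card{(x'\cup y')}\leq\card w\leq N$, so it survives the $N$-cap and is subsumed by some element of $\simplN{X'\stimes Y'}$; the degenerate identities $\emptyset\stimes Y=\emptyset$ and $\sprod_{\emptyset}=\{\emptyset\}$ are handled separately. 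Precisely because $\fsimplN$ discards sets of cardinality exceeding $N$ and removes supersets, this monotonicity lemma is not a triviality, and I expect it to be the main obstacle.

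Granting monotonicity, I would maintain the invariant $\forall n\in\cwNodes, \val(n)\sqsubseteq\update(\val,n)(n)$, which holds at initialisation because $\val_0(n)=\emptyset$ is least. An assignment $\val(n)\gets\update(\val,n)(n)$ can only raise the value at $n$ (by the invariant) and leaves the other entries unchanged, so the new valuation dominates the old one; monotonicity of $\update$ then re-establishes the invariant at $n$, at each parent of $n$ (whose update now reads a larger child), and trivially elsewhere. Consequently the sequence of valuations produced over successive iterations is non-decreasing in the finite poset of simplified valuations.

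Finally I would count iterations. The test of line 7 can succeed only at a strict increase, namely $\val(n)\sqsubseteq\update(\val,n)(n)$ with $\val(n)\neq\update(\val,n)(n)$; since a strictly increasing chain in a finite poset has bounded length, line 8 fires only finitely many times, say $K$. Each firing inserts at most $\card{\cwNodes}$ nodes into $\M$, whereas every iteration removes exactly one node from $\M$ at line 5. Hence the total number of elements ever inserted into $\M$ is at most $\card{\cwNodes}\,(1+K)$, and since each iteration consumes one such element the loop must empty $\M$ and halt, which yields termination together with an explicit iteration bound.
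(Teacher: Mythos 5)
Your proof is correct and follows essentially the same route as the paper's: a finite domain of simplified $N$-set collections, the subsumption order (the paper's $v\succeq v'\EQDEF\simplN{v}=\simplN{v\cup v'}$ is exactly your $v'\sqsubseteq v$), and monotonicity of $\cup$ and $\stimes$ under $\fsimplN$ as the key lemma. You are somewhat more explicit than the paper on two points it leaves implicit --- the survival of $x'\cup y'$ under the cardinality cap in $\fsimplN$, and the final worklist accounting turning monotone convergence into a bound on iterations of $\M$ --- but these are elaborations of the same argument, not a different one.
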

\begin{proof}
Remarking that $\SetNSets{\Obs}$ is finite, defining a partial ordering such that
$\forall v,v'\in\SetNSets{\Obs}, v \succeq v' \EQDEF \simplN{v} = \simplN{v\cup v'}$,
and noting $\val^k \in \Val$ the valuation after $k$ iterations of the algorithm,
it is sufficient to prove that $\val^{k+1}(n) \succeq \val^k(n)$.
Let us define $v_1, v_2, v'_1, v'_2\in\SetNSets{\Obs}$ such that $v_1\succeq v'_1$ and 
$v_2\succeq v'_2$.
We can easily check that $v_1\cup v_2 \succeq v'_1\cup v'_2$ (hence proving the case when
$n\in\Sol$).
As $\simplN{v_1}=\simplN{v_1\cup v'_1} \Leftrightarrow \forall e'_1 \in v'_1, \exists e_1\in v_1:
e_1\subseteq e'_1$,
we obtain that $\forall (e'_1,e'_2)\in v'_1 \times v'_2, \exists (e_1,e_2)\in v_1\times v_2: 
e_1\subseteq e'_1 \wedge e_2\subseteq e'_2$.
Hence $e_1\cup e_2\subseteq e'_1\cup e'_2$, therefore
	$\simplN{v_1\stimes v_2 \cup v'_1\stimes v'_2} = \simplN{v_1\stimes v_2}$,
i.e. $v_1\stimes v_2\succeq v'_1 \stimes v'_2$;
which proves the cases when $n\in \Obj \cup \Proc$.
\end{proof}

\begin{theorem}
\label{thm:correctness}
Given a GLC $\cwA=(\cwNodes,\cwEdges)$ which is sound for the automata network $\Sys$,
the valuation $\val$ computed by $\textproc{\floodname}$ verifies:
$\forall a_i \in \Proc\cap\cwNodes,
	\forall kls\in \val(a_i) \setminus \{\{a_i\}\}$, 
		$a_j$ is not reachable from $\ctx$ within $\Sys \disabling kls$.
\end{theorem}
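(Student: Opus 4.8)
The plan is to combine two ingredients: (i) the valuation $\val$ returned by Algorithm~\ref{alg:flood} is a \emph{fixed point} of the update equations, and (ii) the soundness of $\sol$ at every objective (Property~\ref{pro:sol}). I would first justify (i) from the loop invariant of the algorithm: a node leaves $\M$ only immediately after being recomputed from the current values of its children, and it is re-inserted (via $\parents$) whenever a child changes its value; since the run terminates (Lemma~\ref{lem:halt}) with $\M=\emptyset$, no node's value would change under a further application of $\update$, i.e. $\forall n,\ \val(n)=\update(\val,n)(n)$. This turns the fixed-point equations of \pref{def:update} into usable identities, which is what lets me reason despite the cyclicity of $\cwA$.

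I then argue by contradiction through a \emph{minimal counterexample}. Suppose some triple $(a_i,kls,\pi)$ falsifies the claim, with $a_i\in\cwNodes\cap\Proc$, $kls\in\val(a_i)\setminus\{\{a_i\}\}$, and $\pi: s_0\rightarrow\cdots\rightarrow s_m$ a trace of $\Sys\disabling kls$ such that $s_0\in\ctx$ and $s_m(a)=i$; pick one with $m$ minimal. Using the fixed-point equations together with $\simplN{X}\subseteq X$, I peel off the three layers of the GLC. Writing $a_{j_0}\DEF s_0(a)\in\ctx(a)$, the product equation at $a_i$ yields a factor $kls_{j_0}\in\val(\obj{a_{j_0}}{a_i})$ with $kls_{j_0}\subseteq kls$; the product equation at that objective yields, for each solution $ls^t\in\sol(\obj{a_{j_0}}{a_i})$, a part $kls_t\subseteq kls_{j_0}$; and the union equation at the solution node $\langle\obj{a_{j_0}}{a_i},ls^t\rangle$ gives a local state $c^t\in ls^t$ with $kls_t\in\val(c^t)$.

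The crux is to show that $kls'\DEF\{c^1,\dots,c^n\}$ — which picks one local state from each solution, hence $kls'\in\sprod_t ls^t$ in the sense of Property~\ref{pro:sol} — is disjoint from $\precond\ell$ for every label $\ell$ fired along $\pi$. Granting this, $\pi$ is also a legal trace of $\Sys\disabling kls'$, so $a_i$ is reachable there from a state with $s_0(a)=j_0$; since the GLC is sound, $\sol(\obj{a_{j_0}}{a_i})$ is sound, and this directly contradicts Property~\ref{pro:sol}. To establish the disjointness I treat each $c^t$ separately. If $kls_t=\{c^t\}$, then $c^t\in kls$, and because $\pi$ uses only labels $\ell$ with $kls\cap\precond\ell=\emptyset$, no fired transition has $c^t$ as precondition. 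Otherwise $kls_t\neq\{c^t\}$, and if $c^t$ were a precondition of the transition fired at step $p$, then $c^t$ is present in $s_p$ with $p<m$, so the prefix $s_0\rightarrow\cdots\rightarrow s_p$ reaches $c^t$; since $kls_t\subseteq kls$ and disabling fewer local states only adds transitions, that prefix is also a trace of $\Sys\disabling kls_t$, making $(c^t,kls_t)$ a strictly shorter counterexample and contradicting minimality.

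The main obstacle is precisely the cyclicity of $\cwA$: structural induction on the graph is unavailable, and the tempting claim that the solutions of an objective are reached \emph{strictly before} its target is not granted by Property~\ref{pro:sol}. Measuring a counterexample by the length $m$ of its reaching trace, and noting that a local state used as a precondition at step $p$ is necessarily reached by step $p<m$, is what breaks this circularity cleanly and lets soundness close the argument. A secondary matter requiring care is the bookkeeping of the empty-solution / empty-product corner cases (for instance $\sol(\obj{a_{j_0}}{a_i})=\emptyset$, or $j_0=i$): these either force $\val(a_i)\setminus\{\{a_i\}\}=\emptyset$, making the statement vacuous, or reduce to the $n=0$ instance $kls'=\emptyset$, which is subsumed by the same soundness step.
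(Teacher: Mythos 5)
Your proof is correct, and it takes a genuinely different route from the paper's. The paper disposes of the theorem in one line, ``by recurrence on the valuations: the property is true at each iteration'', i.e.\ it treats the statement as an invariant maintained across updates of $\val^k$; making that rigorous would require formulating companion invariants for the $\Obj$ and $\Sol$ nodes and, inside the inductive step, essentially the same trace argument you isolate as the crux. You instead work only with the \emph{final} valuation, first extracting the fixed-point identities $\val(n)=\update(\val,n)(n)$ from the worklist discipline of \pref{alg:flood} (a node leaves $\M$ only after being recomputed, and re-enters whenever a child changes), and then running a minimal-counterexample argument whose well-founded measure is the \emph{length of the reaching trace} rather than the iteration count. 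This is the right measure: it is what breaks the circularity of $\cwA$, since a local state used as a precondition at step $p$ is necessarily reached by a strictly shorter prefix, and it correctly avoids the unsound shortcut of assuming that solutions of an objective are realised before its target. The peeling of the three layers ($\simplN{X}\subseteq X$, product at $\Proc$ and $\Obj$ nodes, union at $\Sol$ nodes), the two-case disjointness argument showing that the trace survives in $\Sys\disabling kls'$ for the selection $kls'\in\sprod_t ls^t$, the final appeal to \pref{pro:sol}, and the treatment of the empty-product corner cases are all sound. What your approach buys is a self-contained proof that does not depend on \pref{lem:halt}'s monotonicity beyond termination itself; what the paper's sketch would buy, if completed, is a statement that holds at every intermediate iteration (hence for anytime use of the algorithm). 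Note also that you silently and correctly repair the typo in the statement ($a_j$ should read $a_i$).
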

\begin{proof}
By recurrence on the valuations $\val$: the above property is true at each iteration of the algorithm.
\end{proof}

\begin{example}
\pref{tab:example} details the result of the execution of \pref{alg:flood} on the GLC defined in
\pref{fig:graph-example}.
Nodes receive a topological rank, identical ranks implying the belonging to the same SCC.
The (arbitrary) scheduling of the updates of nodes within a SCC follows the order in the table.
In this particular case, nodes are all visited once, as
$\val(\langle \obj{c_2}{c_2},\emptyset\rangle) \stimes \val(\langle \obj{c_1}{c_2}, \{a_3\}\rangle)
= \emptyset$ (hence $\update(\val, c_2)$ does not change the valuation of $c_2$).
Note that in general, several iterations of $\update$ may be required to reach a fixed point.

\begin{table}[t]
\centering
\begin{tabular}{|c|c|l|}
\hline
Node & $\f{rank}$ & $\val$ 
\\\hline
$\langle \obj{b_1}{b_1}, \emptyset\rangle$ & 1 & $ \emptyset $
\\
$\obj{b_1}{b_1}$ & 2 & $ \emptyset $
\\
$b_1$ & 3 & $\{\{ b_1 \}\}$
\\
$\langle \obj{d_1}{d_2}, \{b_1\}\rangle$ & 4 & $\{\{ b_1 \}\}$
\\
$\obj{d_1}{d_2}$ & 5 & $\{\{ b_1 \}\}$
\\
$d_2$ & 6 & $\{\{b_1\},\{d_2\}\}$
\\
$\langle \obj{b_1}{b_3}, \{d_2\}\rangle$ & 7 & $\{\{b_1\},\{d_2\}\}$
\\
$\obj{b_1}{b_3}$ & 8 & $\{\{b_1\},\{d_2\}\}$
\\
$b_3$ & 9 & $\{\{b_1\},\{b_3\},\{d_2\}\}$
\\
$\langle \obj{a_1}{a_3}, \{b_3\}\rangle$ & 10 & $\{\{b_1\},\{b_3\},\{d_2\}\}$
\\
$\langle \obj{c_2}{c_2}, \emptyset\rangle$ & 11 & $\emptyset$
\\
$\obj{c_2}{c_2}$ & 12 & $\emptyset$
\\
$c_2$ & 13 & $\{\{c_2\}\}$
\\
$\langle \obj{a_1}{a_3}, \{b_1,c_2\}\rangle$ & 13 & $\{\{b_1\},\{c_2\}\}$
\\
$\obj{a_1}{a_3}$ & 13 & $\{\{b_1\},\{b_3,c_2\},\{c_2,d_2\}\}$
\\
$a_3$ & 13 & $\{ \{a_3\},\{b_1\},\{b_3,c_2\},\{c_2,d_2\}\}$
\\
$\langle \obj{c_1}{c_2}, \{a_3\}\rangle$ & 13 & $\{ \{a_3\},\{b_1\},\{b_3,c_2\},\{c_2,d_2\}\}$
\\\hline
\end{tabular}
\caption{Result of the execution of \pref{alg:flood} on the GLC in \pref{fig:graph-example}}
\label{tab:example}
\end{table}

\end{example}

It is worth noticing that the GLC abstracts several dynamical constraints in the underlying automata
networks, such as the ordering of transitions, or the synchronous updates of the global state.
In that sense, GLC over-approximates the dynamics of the network, and the resulting cut sets are
under-approximating the complete cut sets of the concrete model.

\section{Application to Systems Biology}
\label{sec:application}

Automata networks, as presented in \pref{def:cfsm}, subsume Boolean and discrete networks,
synchronous and asynchronous, that are widely used for the qualitative modelling of dynamics of
biological networks \cite{Kauffman69,Thomas73,deJong02,Richard10-AAM,BernotSemBRN,egfr104,Laubenbacher11}.

A cut set, as extracted by our algorithm, informs that at least one of the component in the cut
set has to be present in the specified local state in order to achieve the wanted reachability.
A local state can represent, for instance, an active transcription factor or the absence of a certain protein.
It provides potential therapeutic targets if the studied reachability is involved in a disease by preventing
all the local states of a cut set to act, for instance using gene knock-out or knock-in techniques.

We first discuss and compare our methodology with the \emph{intervention sets} analysis within
biological models developed by S. Klamt \textit{et al.}, and provide some benchmarks on a few
examples.

Thanks to the use of the intermediate GLC and to the absence of candidate enumeration, our new method
makes tractable the cut sets analysis on very large models.
We present a recent application of our results to the analysis of a very large scale Boolean
model of biological pathway interactions involving 9000 components.
To our knowledge, this is the first attempt of a formal dynamical analysis on such a large scale model.

\subsection{Related Work}
\label{ssec:is}

The general related work having been discussed in \pref{sec:intro}, we deepen here the comparison of
our method with the closest related work: the analysis of \emph{Intervention Sets} (IS) \cite{Samaga10-JCB}.
In the scope of Boolean models of signalling networks, an IS is a set of local states such that
forcing the components to stick at these local states ensures that the system always reaches a fixed
point (steady state) where certain target components have the desired state.
Their method is complete, i.e., all possible ISs are computed; and contrary to our, allow the
specification of more than one local state for the target state of the intervention.

Nevertheless, the semantics and the computation of ISs have some key differences with our computed
cut sets.
First, they focus only on the reachability of (logical) steady states, which is a stronger condition
than the transient reachability that we are considering.
Then, the steady states are computed using a three-valued logic which allows to cope with undefined
(initial) local states, but which is different from the notion of context that we use in this paper for
specifying the initial condition.

Such differences make difficult a proper comparison of inferred cut sets.
We can however expect that any cut sets found by our method has a corresponding IS in the scope of
Boolean networks when the initial context actually specifies a single initial state.

To give a practical insight on the relation between the two methods, we compare the results for two
signalling networks, both between a model specified with CellNetAnalyser \cite{CNA} to compute ISs
and a model specified in the Process Hitting framework, a particular restriction of asynchronous
automata networks \cite{PMR10-TCSB}, to compute our cut sets.
Process Hitting models have been built in order to over-approximate the dynamics considered for the
computation of ISs%
\footnote{Models and scripts available at \url{http://loicpauleve.name/cutsets.tbz2}}.

\paragraph{Tcell.}
Applied to a model of the T-cell receptor signalling between 40 components \cite{Klamt06}, we are
interested in preventing the activation of the transcription factor $AP1$.
For an instance of initial conditions, and limiting the computations to $3$-sets,
$31$ ISs have been identified ($28$ $1$-sets, $3$ $2$-sets, $0$ $3$-set),
whereas our algorithm found $29$ cut sets ($21$ $1$-sets and $8$ $2$-sets),
which are all matching an IS ($23$ are identical, $6$ strictly including ISs).
ISs are computed in $0.69$s while our algorithm under-approximates the cut sets in $0.006$s.
Different initial states give comparable results.

\paragraph{Egfr.}
Applied to a model of the epidermal growth factor receptor signalling pathway of 104 components
\cite{egfr104}, we are interested in preventing the activation of the transcription factor $AP1$.
For an instance of initial conditions, and limiting the computations to $3$-sets,
$25$ ISs have been identified ($19$ $1$-sets, $3$ $2$-sets, $3$ $3$-sets),
whereas our algorithm found $14$ cut sets ($14$ $1$-sets), which are all included in the ISs.
ISs are computed in $98$s while our algorithm under-approximates the cut sets in $0.004$s.
Different initial states give comparable results.

\medskip

As expected with the different semantics of models and cut sets, resulting ISs matches all the cut
sets identified by our algorithm, and provides substantially more sets.
The execution time is much higher for ISs as they rely on candidate enumeration in order to provide
complete results, whereas our method was designed to prevent such an enumeration but
under-approximates the cut sets.

\medskip

In order to appreciate the under-approximation done by our method at a same level of abstraction and
with identical semantics, we compare the cut sets identified by our algorithm with the cut sets
obtained using a naive, but complete, computation.
The naive computation enumerates all cut set candidates and, for each of them, disable the local
states in the model and perform model-checking to verify if the target local state is still reachable.
In the particular case of these two models, and limiting the cut sets to $3$ and $2$-sets
respectively for the sake of tractability, no additional cut set has been uncovered by the complete
method.
Such a good under-approximation could be partially explained by the restrictions imposed on the
causality by the Process Hitting framework, making the GLC a tight over-approximation of the
dynamics \cite{PMR12-MSCS}.

\subsection{Very Large Scale Application to Pathway Interactions}

In order to support the scalability of our approach, we apply the proposed algorithm to a very large
model of biological interactions, actually extracted from the PID database \cite{PID} referencing 
various influences (complex formation, inductions (activations) and inhibitions, transcriptional
regulation, etc.) between more than 9000 biological components (proteins, genes, ions, etc.).

Amongst the numerous biological components, the activation of some of them are known to control key
mechanisms of the cell dynamics.
Those activations are the consequence of intertwining signalling pathways and depend on
the environment of the cell (represented by the presence of certain \textit{entry-point} molecules).
Uncovering the environmental and intermediate components playing a major role in these
signalling dynamics is of great biological interest.

The full PID database has been interpreted into the Process Hitting framework, a subclass of
asynchronous automata networks, from which the derivation of the GLC has been addressed in previous
work \cite{PMR12-MSCS}.
The obtained model gathers components representing either biological entities modelled as boolean
value (absent or present), or logical complexes.
When a biological component has several competing regulators, the precise cooperations are not
detailed in the database, so we use of two different interpretations:
all (resp. one of) the activators and none (resp. all but one of) the inhibitors have to be present
in order to make the target component present.
This leads to two different discrete models of PID that we refer to as 
\texttt{whole\_PID\_AND} and \texttt{whole\_PID\_OR}, respectively.

Focusing on \texttt{whole\_PID\_OR}, the Process Hitting model relates more than 21000 components,
either biological or logical, containing between 2 and 4 local states.
Such a system could actually generate $2^{33874}$ states.
3136 components act as environment specification, which in our boolean interpretation
leads to $2^{3136}$ possible initial states, assuming all other components start in the absent state.

We focus on the (independent) reachability of active SNAIL transcription factor, involved in the
epithelial to mesenchymal transition \cite{emt}, and of active p15INK4b and p21CIP1 cyclin-dependent
kinase inhibitors involved in cell cycle regulation \cite{tgf}.
The GLC relates $20045$ nodes, including $5671$ component local states (biological or logical);
it contains 6 SCCs with at least 2 nodes, the largest being composed of $10238$ nodes
and the others between $20$ and $150$.

\pref{tab:exec} shows the results of a prototype implementation%
\footnote{Implemented as part of the \textsc{Pint} software --
\url{http://process.hitting.free.fr}\\
Models and scripts available at \url{http://loicpauleve.name/cutsets.tbz2}
}
of \pref{alg:flood} for the search of up to the $6$-sets of biological component local states.
One can observe that the execution time grows very rapidly with $N$ compared to the number of
visited nodes.
This can be explained by intermediate nodes having a large set of cut $N$-sets leading to a costly
computation of products.

\begin{table}[t]
\centering
\begin{tabular}{|l|c|c|c|c|c|c|}
\hline
\multirow{2}{*}{Model} &
\multirow{2}{*}{N} &
\multirow{2}{*}{Visited nodes} &
\multirow{2}{*}{Exec. time} &
\multicolumn{3}{c|}{Nb. of resulting N-sets}
\\\cline{5-7}
& & & & 
SNAIL$_1$ & p15INK4b$_1$ & p21CIP1$_1$
\\\hline
\multirow{6}{*}{whole\_PID\_OR}
& 1 & 29022 & 0.9s & 1 & 1 & 1 \\\cline{2-7}
& 2 & 36602 & 1.6s & +6 & +6 & +0 \\\cline{2-7}
& 3 & 44174 & 5.4s & +0 & +92 & +0 \\\cline{2-7}
& 4 & 54322 & 39s & +30 & +60 & +0 \\\cline{2-7}
& 5 & 68214 & 8.3m & +90 & +80 & +0 \\\cline{2-7}
& 6 & 90902 & 2.6h & +930 & +208 & +0 
\\\hline
\end{tabular}
\caption{
Number of nodes visited and execution time of the search for cut $N$-sets of 3 local states.
For each N, only the number of additional N-sets is displayed.}
\label{tab:exec}
\end{table}

While the precise biological interpretation of identified $N$-sets is out of the scope of this paper,
we remark that the order of magnitude of the number of cut sets can be very different
(more than $1000$ cut $6$-sets for SNAIL; none cut $6$-sets for p21CIP1, except the gene that
produces this protein).
It supports a notion of robustness for the reachability of components, where the less cut sets, the
more robust the reachability to various perturbations.

Applied to the \texttt{whole\_PID\_AND} model, our algorithm find in general much more cut $N$-sets,
due to the conjunctive interpretation.
This brings a significant increase in the execution time: the search up to the cut
$5$-sets took 1h, and the $6$-sets leads to an out-of-memory failure.

\smallskip

Because of the very large number of components involved in this model, the naive exact algorithm
consisting in enumerating all possible $N$-sets candidates and verifying the concerned reachability
using model-checking is not tractable.
Similarly, making such a model fit into other frameworks, such as CellNetAnalyser (see previous
sub-section) is a challenging task, and might be considered as future work.

\section{Discussion}
\label{sec:discuss}

We presented a new method to efficiently compute cut sets for the reachability of a local state of a
component within networks of finite automata from any state delimited by a provided so-called
context.
Those cut sets are sets of automata local states such that disabling the activity of all local
states of a cut set guarantees to prevent the reachability of the concerned local state.
Automata networks are commonly used to represent the qualitative dynamics of interacting biological
systems, such as signalling networks.
The computation of cut sets can then lead to propose potential therapeutic targets that have been
formally identified from the model for preventing the activation of a particular molecule.

The proposed algorithm works by propagating and combining the cut sets of local states along a 
\emph{Graph of Local Causality} (GLC), that we introduce here upon automata networks.
A GLC relates the local states that are necessary to occur prior to the reachability
of the concerned local state.
Several constructions of a GLC are generally possible and depend on the semantics of the
model.
We gave an example of such a construction for automata networks.
That GLC has a size polynomial in the total number of local states in the network, and exponential
in the number of local states within one automaton.
Note that the core algorithm for computing the cut sets only requires as input a GLC satisfying
a soundness property that can be easily extended to discrete systems that are more general than the
automata networks considered here.

The computed cut sets form an under-approximation of the complete cut sets as the GLC abstracts
several dynamical constraints from the underlying concrete model.
Our algorithm does not rely on a costly enumeration of the potential sets of candidates, 
and thus aim at being tractable on very large networks.

A prototype implementation of our algorithm has been successfully applied to the extraction of
cut sets from a Boolean model of a biological system involving more than 9000 interacting components.
To our knowledge this is the first attempt of such a dynamical analysis for such large biological models.
We note that most of the computation time is due to products between large sets of cut $N$-sets.
To partially address this issue, we use of prefix trees to represent set of sets
on which we have specialized operations to stick to sets of $N$-sets (\pref{suppl:bdd}).
There is still room for improvement as our prototype does not implement any caching or variable
re-ordering.

\medskip

The work presented in this paper can be extended in several ways, notably with a 
\emph{posterior enlarging of the cut sets}.
Because the algorithm computes the cut $N$-sets for each node in the GLC,
it is possible to construct \textit{a posteriori} cut sets with a greater cardinality by chaining them.
For instance, let $kps\in\val(a_i)$ be a cut $N$-set for the reachability of $a_i$,
for each $b_j\in kps$ and $kps'\in\val(b_j)$, $(kps\setminus\{b_j\})\cup kps' $ is a cut set for $a_i$.
In our biological case study, this method could be recursively applied until cut sets are composed of
states of automata only acting for the environmental input.

With respect to the defined computation of cut $N$-sets, one could also derive \emph{static
reductions} of the GLC.
Indeed, some particular nodes and arcs of the GLC can be removed without affecting the final
valuation of nodes. 
A simple example are nodes representing objectives having no solution: such nodes can be safely
removed as they bring no candidate $N$-sets for parents processes.
These reductions conduct to both speed-up of the proposed algorithm but also to more compact
representations for the reachability causality. 

Finally, future work is considering the identification of necessary transitions (reactions), in addition to
local states, that are necessary for a reachability to occur.
The introduction of additional dynamical constraints in the GLC, such as conflicts or time scales,
would also help to increase the number of cut sets identifiable by such abstract interpretation
techniques.

\bigskip

\noindent
\textbf{Acknowledgements.}
LP and GA acknowledge the partial support of the French National Agency for Research
(ANR-10-BLANC-0218 BioTempo project).

\bibliographystyle{splncs}
\bibliography{bibliography,pauleve}

\appendix
\clearpage
\section{Implementation of Sets of Minimal N-Sets}
\label{suppl:bdd}

This appendix gives some details on the data structure we developed to efficiently manipulate sets
of minimal $N$-sets, i.e., sets of $N$-sets containing no sursets.
The data structure is similar to prefix trees, on which operations have been design to
perform union, product and simplification (minimisation) of sets of $N$-sets.
An OCaml\footnote{\url{http://caml.inria.fr}}
implementation of these routines is available at
\url{http://code.google.com/p/pint/source/browse/pintlib/kSets.ml}.

Given a (possibly infinite) set of totally ordered elements, such as integers,
the data structure is a forest where leafs are either $\bot$ or $\top$, and intermediate nodes are
elements.
Each path from any root to any leaf form a strictly increasing sequence of elements.
The maximum height of the forest is $N+1$.
\pref{fig:nsets} gives an example of such a structure.

\begin{figure}
\centering
\begin{tikzpicture}

\node at (2,0) (a2) {$2$};
\node[below of=a2] (b5) {$5$};
\node[below of=b5] (c5t) {$\top$};

\node at (0.5,0) (a1) {$1$};

\node[left of=b5] (b3) {$3$};
\node[below of=b3] (c3t) {$\top$};

\node[left of=b3] (b2) {$2$};
\node[below of=b2] (c4) {$4$};
\node[below of=c4] (d4t) {$\top$};

\path[->]
	(a1) edge (b2)
	(b2) edge (c4)
	(c4) edge (d4t)
	(a1) edge (b3)
	(b3) edge (c3t)
	(a2) edge (b5)
	(b5) edge (c5t)
;

\end{tikzpicture}
\caption{Representation of set of sets $\{ \{1,2,4\}, \{1,3\}, \{2,5\} \}$}
\label{fig:nsets}
\end{figure}
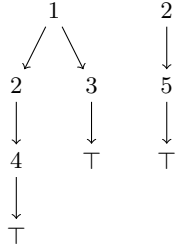

\subsubsection*{Data structure}
\def\G{\mathcal D}

An instance of the data structure is either $\top$, acting for the set containing the empty set
$(\{\emptyset\})$, or $\bot$, acting for the empty set, or an (ordered) associative map from
(prefix) elements to other instances of the data structure.
This is summarised by the following definition:
\begin{equation}
\G ::= \top \mid \bot \mid \{ i_1 \mapsto \G_1, \cdots, i_k \mapsto \G_k \}
\label{eq:nsets}
\end{equation}
with $k\geq 1$ and $i_1 < \cdots < i_k$.
Given $\G = \{ i_1\mapsto \G_1, \dots, i_k\mapsto \G_k\}$, $i_1,\dots,i_k$ are the \emph{prefixes},
and $\G_1,\dots,\G_k$ their corresponding \emph{suffixes}.
We also note $\forall j\in[1;k], \G(i_j) \DEF \G_j$.
As mentioned, $\bot$ acts as the empty set, so any prefix mapped to an empty set can be removed:
\begin{align*}
\{ i_1 \mapsto \bot \} &\equiv \bot
\\
\{ i_1 \mapsto \G_1,  L, i_j\mapsto \bot, R \} &\equiv \{ i_1\mapsto \G_1, L, R \}
\end{align*}
Hereafter, we assume that this removing is done implicitly.

\begin{example} The set of sets $\{ \{1,2,4\}, \{1,3\}, \{2,5\} \}$ is encoded as
\[\{ 1 \mapsto \{ 2 \mapsto \{ 4 \mapsto \top \}, 3 \mapsto \top \}, 2 \mapsto \{ 5 \mapsto \top
\}\}\]
which corresponds to the forest in \pref{fig:nsets}.
\end{example}

We first describe two helper functions on top of $\G$ that will be used for the operations.

\paragraph{\textproc{inds}($\G$)}
Given a data structure $\G$, the \textproc{inds} function returns the sequence of prefix elements
in the \emph{reverse order}. If $\G$ is either $\top$ or $\bot$, the empty sequence is returned.

\paragraph{\textproc{up}($\G,h$)}
Given a data structure $\G$ and a level $h$ (initially $1$), \textproc{up} removes all the sets in $\G$ that can not
be extended with one additional elements, i.e., all the sets with at least $N$ elements.

\begin{figure}
\hrule
\medskip

\begin{small}
\begin{algorithmic}
\Function{inds}{$\G$}
\If{$\G \in \{\top, \bot\}$}
 \Return $[]$
\ElsIf{$\G \equiv \{ i_1 \mapsto \G_1, \cdots, i_k \mapsto \G_k \}$}
\State \Return $[i_k, \dots, i_1]$
\EndIf
\EndFunction
\end{algorithmic}
\end{small}

\medskip
\hrule
\medskip

\begin{small}
\begin{algorithmic}
\Function{up}{$\G,h$}
\If{$\G \in \{\top, \bot\}$}
\Return $\G$
\ElsIf{$h \geq N$}
\Return $\bot$
\Else
\For{$i\gets \Call{inds}{\G}$}
\State $\G(i) \gets \Call{up}{\G(i),h+1}$
\EndFor
\State \Return $\G$
\EndIf
\EndFunction
\end{algorithmic}
\end{small}

\medskip
\hrule
\end{figure}

\subsubsection*{Union and product}
The \textproc{union} and \textproc{product} (realizing the $\stimes$ operator described in
\pref{sec:intro}) operations guarantee the ordering between prefixes, and that no set has
cardinality strictly greater than $N$.

\paragraph{\textproc{union}($\G_a,\G_b$)}
Given two set of sets, this function merges the prefixes of sets.

\paragraph{\textproc{product}($\G_a,\G_b,h$)}
Given two set of sets at level $h$, the product is computed as follows:
if two sets share the same prefix, the product results from the product of suffixes;
if two sets have different prefixes, the one having the highest prefix is augmented by one level
(if possible), and its product is computed with the suffix of the other. 
The result is the suffix of the lowest prefix.

\begin{figure}
\hrule\medskip

\begin{small}
\begin{algorithmic}
\Function{union}{$\G_a,\G_b$}
\If{$\G_a = \top$ \textbf{or} $\G_b = \top$}
	\Return $\top$
\ElsIf{$\G_a = \bot$}
	\Return $\G_b$
\ElsIf{$\G_b=\bot$}
	\Return $\G_a$
\Else
\State $\G \gets \G_b$
\For{$i\gets\Call{inds}{\G_a}$}
	\If{$i \in \Call{inds}{\G_b}$}
		\State $\G(i) \gets\Call{union}{\G_a(i),\G_b(i)}$
	\Else
		\State $\G(i) \gets \G_a(i)$
	\EndIf
\EndFor
\State \Return $\G$
\EndIf
\EndFunction
\end{algorithmic}
\end{small}

\medskip
\hrule
\medskip

\begin{small}
\begin{algorithmic}
\Function{product}{$\G_a,\G_b,h$}
\If{$\G_a = \top$}
	\Return $\G_b$
\ElsIf{$\G_b=\top$}
	\Return $\G_a$
\Else
\State $\G \gets \bot$
\For{$i\gets\Call{inds}{\G_a}$}
\For{$j\gets\Call{inds}{\G_b}$}
	\If{$i = j$}
		\State $\G_{ij}\gets\Call{product}{\G_a(i),\G_b(j),h+1}$
	\ElsIf{$i > j$}
		\State $\G_{a/i} \gets \{ i\mapsto \Call{up}{\G_a(i),h+1}$
		\State $\G_{ij}\gets\Call{product}{\G_{a/i}, \G_b(j), h+1}$
	\ElsIf{$i < j$}
		\State $\G_{b/j} \gets \{ j\mapsto \Call{up}{\G_b(j),h+1}$
		\State $\G_{ij}\gets\Call{product}{\G_a(i), \G_{b/j}, h+1}$
	\EndIf
	\State $\G\gets\Call{union}{\G, \G_{ij}}$
\EndFor
\EndFor
\State \Return $\G$
\EndIf
\EndFunction
\end{algorithmic}
\end{small}

\medskip
\hrule
\end{figure}

\subsubsection*{Sursets simplification (minimisation)}
The above operations do not guarantee that the resulting set of $N$-sets is minimal, i.e., no
sursets are present.
Thanks to the ordering of elements in the forest, removing sursets of a given set can be done
efficiently by only checking the sets having a lower prefix.

\paragraph{\textproc{remove}($\G_p,\G,h$)}
Given a set of sets $\G_p$, this function removes all the sets in $\G$ that are sursets of sets in
$\G_p$, starting at level $h$.
If $\G_p$ is $\top$, $\G$ becomes the empty set; otherwise the process is repeated on all the
suffixes having a prefix lower than the prefix of each set to remove.

\paragraph{\textproc{simplify}($\G,h$)}
At level $h$ (initially 1), ranging the prefixes from the higher to the lower, 
each prefixed set is recursively simplified, then any sursets of previously computed sets are
removed from it.
The outputted set of $N$-sets is hence minimal.

\begin{figure}
\hrule\medskip
\begin{small}
\begin{algorithmic}
\Function{remove}{$\G_p,\G,h$}
\If{$\G_p = \top$}
\Return $\bot$
\ElsIf{$\G \in \{\top, \bot\}$}
\Return $\G$
\Else
\For{$j\gets\Call{inds}{\G_p}$}
	\For{$i\gets\Call{inds}{\G}$}
		\If{$i=j$}
		\State $\G(i)\gets \Call{remove}{\G_p(j),\G(i),h+1}$
		\ElsIf{$i<j$}
		\State $\G_{p/j} \gets \{j\mapsto \Call{up}{\G_p(j),h+1}\}$
		\State $\G(i) \gets \Call{remove}{\G_{p/j}, \G(i), h+1}$
		\EndIf
	\EndFor
\EndFor
\State \Return $\G$
\EndIf
\EndFunction
\end{algorithmic}
\end{small}

\medskip
\hrule
\medskip

\begin{small}
\begin{algorithmic}
\Function{simplify}{$\G,h$}
\If{$\G \in \{\top, \bot\}$}
\Return $\G$
\Else
\State $\G'\gets\bot$
\For{$i\gets\Call{inds}{\G}$}
	\State $\G_{/i} \gets \{i\mapsto \Call{simplify}{\G(i),h+1}\}$
	\State $\G_{/i} \gets \Call{remove}{\G', \G_{/i},h}$
	\State $\G'\gets \Call{union}{\G',\G_{/i}}$ 
\EndFor
\State \Return $\G'$
\EndIf
\EndFunction
\end{algorithmic}
\end{small}
\medskip
\hrule
\end{figure}

\end{document}